\providecommand{\keywords}[1]
{
	\small	
	\textbf{\textit{Keywords---}} #1
}
\DeclarePairedDelimiter\absval{\lvert}{\rvert}
\newcommand{\abs}[1]{\absval*{#1}}
\newcommand{\qquote}[1]{``#1''}
\newcommand{\sep}{\,\,|\,\,}
\newlang{\GAIP}{GAIP}
\newlang{\mGAIP}{mGAIP}
\newlang{\pGAIP}{pGAIP}
\newlang{\dGAIP}{dGAIP}
\newlang{\Class}{C}
\theoremstyle{plain}
\newtheorem{Th}{Theorem}
\newtheorem{Prop}[Th]{Proposition}
\newtheorem{Cor}[Th]{Corollary}
\theoremstyle{definition}
\newtheorem{Def}[Th]{Definition}
\title{A Note on the Hardness of Problems from Cryptographic Group Actions}
\author{Giuseppe D'Alconzo\\ \href{mailto:giuseppe.dalconzo@polito.it}{giuseppe.dalconzo@polito.it} \\ Department of Mathematical Sciences, Politecnico di Torino}
\date{}
\begin{document}
	\maketitle
	
	\begin{abstract}
		Given a cryptographic group action, we show that the Group Action Inverse Problem ($\GAIP$) and other related problems cannot be $\NP$-hard unless the Polynomial Hierarchy collapses. We show this via random self-reductions and the design of interactive proofs. Since cryptographic group actions are the building block of many security protocols, this result serves both as an upper bound on the worst-case complexity of some cryptographic assumptions and as proof that the hardness in the worst and in the average case coincide. We also point out the link with Graph Isomorphism and other related $\NP$ intermediate problems.
	\end{abstract}
	
	\keywords{hard homogeneous spaces, NP-hardness, random self-reductions}
	
	\section{Introduction}
	\label{Sect:intro}
	
	\paragraph{Group Actions.}
	Cryptographic Group Actions (CGA) were introduced by Almati et al. \cite{alamati2020cryptographic} in 2020 and, under the name of Hard Homogeneous Spaces by Couveignes \cite{couveignes2006hard} in 2006. They are a powerful tool to design cryptographic protocols. The purpose of this setting is to generalise some well-known computational problems and assumptions like the Discrete Logarithm Problem (DLP). Their study had an increasing interest with the coming of Isogeny-based cryptography, in particular, among others, CSIDH \cite{castryck2018csidh} and Csi-fish \cite{beullens2019csi}, but more recently even with some code-based construction \cite{barenghi2021less} and lattices \cite{beullens2020calamari}.
	
	\paragraph{Computational Complexity Theory.}
	On the other hand, computational complexity has always been a tool in cryptographic proofs, serving as an upper bound on the hardness of certain problems. By its nature, cryptography has to deal with \emph{hard-on-average} problems and common assumptions embrace this setting. Classical complexity theory initially studied worst-case scenario until Levin, in its work \cite{levin1986average}, proposed a theory about problems with a distribution associated on their inputs. This defines a \emph{distributional problem} and the average-case study of worst-case (believed) intractable problems. A bridge between worst and average-case is built upon random reductions \cite{abadi1989hiding,feigenbaum1990locally,feigenbaum1993random} binding a problem $D$ to a distributional problem $D'$. In particular, in this work we will use \emph{(nonadaptively) random self-reductions} from \cite{feigenbaum1993random}, but an analogue construction leading the same results can be done with \emph{information hiding schemes} introduced in \cite{abadi1989hiding}. Another tool to study the worst-case hardness of decision problems is given by the so-called \emph{interactive proofs} (or protocols). In this setting, two machines, a Prover and a Verifier, exchange messages to decide whether an instance is a solution of a problem.
	
	\paragraph{This work.}
	We introduce some worst-case problems modelling computational assumptions related to group actions from \cite{alamati2020cryptographic}: by their nature, these assumptions cover the average-case view of these problems. We show that, if such problems are \NP-hard, then some widely believed complexity assumptions fall. In fact, we prove that $\GAIP$ and other problems from regular group actions are \emph{random self-reducible}, i.e. there exists a reduction from a particular instance to one or many random instances, and this, if they are \NP-hard, leads to the collapse of the Polynomial Hierarchy at the third level. This serves as an upper bound on the worst-case complexity of cryptographic group action and proves that the hardness in the average and the worst case is the same. We also study problems from non-transitive group actions and the strong link with the Graph Isomorphism Problem: in particular, the latter is an instance of the more general \emph{Decisional Group Inversion Problem} ($\dGAIP$). Other $\NP$ intermediate problems are shown to be particular instances of $\dGAIP$.  We prove that this problem is both in the class $\coAM$ and at the second level of the low hierarchy $\L_2^P$.\\
	This paper is organised as follows: in Section \ref{Sect:grAct} we recall the formal definition of group actions and their cryptographic properties (one-way-ness, weakly unpredictability and weakly pseudorandomness). In the successive Section \ref{Sect:CC} we introduce some tools from complexity theory: worst-to-average-case reductions and interactive protocols. Section \ref{Sect:ProblemsGA} concerns our main result, defining some computational problems from regular group actions and analysing them. We deal also with non-transitive actions, their connection with the Graph Isomorphism Problem and their belonging to some complexity classes. Then Section \ref{Sect:concl} contains conclusions and observations.
	
	\subsection{Cryptographic Group Actions}
\label{Sect:grAct}

\subsubsection{Effective Group Actions}
\label{Subsect:EGA}

We recall the definition of group action.
\begin{Def}
	Let $G$ be a multiplicative group and $X$ a set. We say that $G$ acts on $X$ if there exists a map $\star:G\times X \to X$ such that:
	\begin{enumerate}
		\item if $e$ is the identity of $G$, for every $x\in X$ we have $e\star x = x$ and
		\item for every $g_1,g_2 \in G$ and every $x\in X$, we have $(g_1g_2)\star x = g_1\star(g_2\star x)$.
	\end{enumerate}
	The group action is denoted with $(G,X,\star)$.
\end{Def}
A group action  $(G,X,\star)$ can satisfy some particular properties, it can be:
\begin{itemize}
	\item \emph{Transitive}: for every $x_1,x_2\in X$ there exists $g\in G$ such that $x_1=g\star x_2$.
	\item \emph{Free}: for every $g\in G$, $g$ is the identity if and only if there exists $x\in X$ such that $g\star x = x$.
	\item \emph{Regular}: if $(G,X,\star)$ is both free and transitive.
\end{itemize}
We can see that if the action $(G, X, \star)$ is regular and the group $G$ is finite, then for every $x\in X$ the map $f_x:g\mapsto g\star x$ is a bijection and $\abs{G}=\abs{X}$. Given a transitive action, we can define $\delta(x,y)$ as the element of $G$ for which $x=\delta(x,y) \star y$. If the action is non-transitive, the element $\delta(x,y)$ could not exist: we will denote this fact saying that $\delta(x,y)$ is not in $G$.

To be suitable in cryptography, \cite{alamati2020cryptographic} defines the concept of \emph{effective} group action.
\begin{Def}\label{Def:Eff}
	A group action $(G,X,\star)$ is \emph{effective} if
	\begin{itemize}
		\item The group $G$ is finite and there exist probabilistic polynomial-time (PPT) algorithms for
		\begin{enumerate}
			\item \emph{Membership testing}: decide whether a bit-string represents an element of $G$.
			\item \emph{Equality testing}: given two bit-strings, decide whether they represent the same element of $G$.
			\item \emph{Sampling}: given a distribution $\mathcal{D}_G$ on $G$, sample with respect to $\mathcal{D}_G$.
			\item \emph{Operation}: compute $g_1g_2$ for every $g_1,g_2\in G$.
			\item \emph{Inversion}: compute $g^{-1}$ for every $g\in G$.
		\end{enumerate}
		\item The group $X$ is finite and there exist PPT algorithms for
		\begin{enumerate}
			\item \emph{Membership testing}: decide whether a bit-string represents an element in $X$.
			\item \emph{Unique representation}: given an element in $X$ compute a bit-string that canonically represent it.
		\end{enumerate}
		\item There is an efficient algorithm that given $g\in G$ and $x\in X$ computes $g\star x$.
	\end{itemize}
\end{Def}
Another optional property is the existence of a particular set element $x_0\in X$ called \emph{origin}. Given a free action $(G,X,\star)$, we can construct the regular action $(G,X_0,\star)$, where $X_0=\{ g\star x_0 \sep g\in G \}$. \\
For simplicity, since we can always construct a regular action from a free one, we will focus on them in Subsection \ref{Subsect:RegAct} and on non-transitive ones in Subsection \ref{Subsect:non-tr}. Note that in \cite{couveignes2006hard}, Hard Homogeneous Spaces are defined from a regular group action.

There is a particular case when we do not assume the Unique representation property for $X$. In fact, we admit that computing a canonical form for every element $x$ in $X$ can be hard. An example of such group action can be seen in Subsection \ref{Subsect:non-tr}.

\subsubsection{Cryptographic Assumptions}
\label{Subsect:CryAss}
Observe that all the requirements in the previous definitions lead to an efficient (and then, effective) use of the action in a protocol. In order to use them in cryptography we need some computational assumptions, as defined in \cite{alamati2020cryptographic}.\\
With $\mathbf{P}[A]$ we denote the probability of the event $A$. A function $f(\lambda)$ is \emph{negligible} in $\lambda$ if there exists an $\lambda_0$ such that for every $\lambda\ge \lambda_0$ we have $f(\lambda)< \frac{1}{\lambda^c}$ for every $c\in\mathbb{N}$.

In \cite{alamati2020cryptographic}, they define concepts of one-way function and weakly unpredictable and weakly pseudorandom permutations. To ease the notation, we directly define these concepts in terms of group actions.
\begin{Def}
	Let $\lambda$ be a parameter indexing $G$ and $X$. Given $\mathcal{D}_G$ and $\mathcal{D}_X$ be two distributions over $G$ and $X$ respectively, then the group action $(G,X,\star)$ is \emph{$(\mathcal{D}_G,\mathcal{D}_X)$-one-way} if, given the family $\{f_x:G\to X\}_{x\in X}$, for all PPT adversaries $\mathcal{A}$ the function
	$$ P(\lambda) := \mathbf{P}[f_x\left(\mathcal{A}\left(x,f_x(g)\right)\right) = f_x(g)] $$
	is negligible in the parameter $\lambda$,
	where $x$ is sampled according to $\mathcal{D}_X$ and $g$ according to $\mathcal{D}_G$.
\end{Def}

For any $g\in G$ we define the map $\pi_g:X\to X$ where $\pi_g(x)=g\star x$. If the action is regular, $\pi_g$ is a permutation.

\begin{Def}
	Let $\lambda$ be a parameter indexing $G$ and $X$. Given $\mathcal{D}_G$ and $\mathcal{D}_X$ be two distributions over $G$ and $X$ respectively, then the group action $(G,X,\star)$ is \emph{$(\mathcal{D}_G,\mathcal{D}_X)$-weakly unpredictable} if, given the family of permutations $\{\pi_g:X\to X\}_{g\in G}$ and the randomised oracle $\Pi_g$ that, when queried, samples $x$ from $\mathcal{D}_X$ and outputs $(x,\pi_g(x))$, for all PPT adversaries $\mathcal{A}$ the function
	$$ P(\lambda) := \mathbf{P}[\mathcal{A}^{\Pi_g}\left(y\right) = \pi_g(y)] $$
	is negligible in the parameter $\lambda$,
	where $y$ is sampled according to $\mathcal{D}_X$ and $g$ according to $\mathcal{D}_G$.
\end{Def}

With $\mathcal{S}_X$ we denote the set of permutations over the set $X$.

\begin{Def}
	Let $\lambda$ be a parameter indexing $G$ and $X$. Given $\mathcal{D}_G$ and $\mathcal{D}_X$ be two distributions over $G$ and $X$ respectively, then the group action $(G,X,\star)$ is \emph{$(\mathcal{D}_G,\mathcal{D}_X)$-weakly pseudorandom} if, given the family of permutations $\{\pi_g:X\to X\}_{g\in G}$ and two randomised oracles
	\begin{enumerate}
		\item $\Pi_g$ that, when queried samples $x$ from $\mathcal{D}_X$, outputs $(x,\pi_g(x))$;
		\item $U$ that, when queried samples $x$ from $\mathcal{D}_X$ and $\sigma\in \mathcal{S}_X$ uniformly at random, outputs $(x,\sigma(x))$;
	\end{enumerate}
	for all PPT adversaries $\mathcal{A}$ the function
	$$ P(\lambda) := \abs{\mathbf{P}[\mathcal{A}^{\Pi_g}\left(1^\lambda\right) = 1] - \mathbf{P}[\mathcal{A}^{U}\left(1^\lambda\right) = 1] }$$
	is negligible in the parameter $\lambda$,
	where $y$ is sampled according to $\mathcal{D}_X$ and $g$ according to $\mathcal{D}_G$.
\end{Def}

The flavour of the previous computational definitions is mostly cryptographic but we want to investigate the hardness of problems they implicate. In Section \ref{Sect:ProblemsGA} we will focus on worst-case problems directly related to these computational assumptions.

Cryptographic Group Actions, i.e. group actions for which the above properties are assumed, can be used as a tool in the design of a variety of different cryptographic protocols: key exchange, sigma protocols, dual-mode public key encryption, two-message statistically sender-private oblivious transfer and Naor-Reinold pseudorandom function. In many of these applications is required an abelian group, however in this work we consider the generic case. An overview is given in \cite{alamati2020cryptographic}.

\subsection{Tools from Computational Complexity}
\label{Sect:CC}

Let $\{0,1\}^*$ be the set of binary string of finite length and $\{0,1\}^\lambda$ the set of binary string of length $\lambda$. For any $x\in\{0,1\}^*$, we denote with $\abs{x}$ its length.

\subsubsection{Polynomial Hierarchy and Low Sets}
\label{Subsect:PH}
A \emph{decision problem} is a set $D$ of binary strings. We ask if, given a string $x$, it belongs to $D$. A \emph{search problem} $S$ is a set of couples of binary strings $(x,y)$ and we ask, given $x$, to find $y$ such that $(x,y)\in S$.\\
The class $\P$ consists of decision problems that can be solved in (deterministic) polynomial-time, while the class $\NP$ contains decision problems solvable by a nondeterministic polynomial-time algorithm (or, equivalently, by a nondeterministic Turing Machine). Alternatively, $\NP$ can be defined as all the problems $D$ such that for every instance $x$, if $x$ is in $D$, then there exists a \qquote{witness} $y_x$ such that $x\in D$ can be verified in polynomial time thanks to $y_x$. A problem is said \emph{$\NP$-hard} if every problem in $\NP$ can be reduced to it in polynomial time. More generally, given a class $\Class$, we say that a problem is \emph{$\Class$-hard} if every problem in that class can be reduced to it. A problem is \emph{$\Class$-complete} if it $\Class$-hard and it is in $\Class$.\\
The \emph{complement} of a decision problem $D$ is denoted with $\overline{D}$ and consists of all the strings $x$ that are not in $D$. Given a class $\Class$, we define the class $\co \Class$, containing all the complements of problems in $\Class$.\\
The \emph{Polynomial Hierarchy} ($\PH$) \cite{stockmeyer1976polynomial} is a chain of classes $\Sigma_{i}^P$ and $\Pi_i^P$ indexed by non-negative integers $i$, where $\Sigma_{i}^P=\co\Pi_i^P$. At lower levels we have $\Sigma_0^P=\Pi_0^P=\textsc{P}$ and $\Sigma_{1}^P=\NP$. We define $\PH=\bigcup_{i\ge 0} \Sigma_i^P$, or equivalently $\PH=\bigcup_{i\ge 0} \Pi_i^P$. It is known that 
each level $i$ is contained in the $(i+1)$-th one, but it is not known if these inclusions are proper. If there exists an $i$ such that $\Sigma_{i}^P=\Sigma_{i+1}^P$, we have that $\PH=\Sigma_i^P$ \cite{stockmeyer1976polynomial} and we say that the Polynomial Hierarchy \emph{collapses at the $i$-th level}. It is believed that $\PH$ does not collapses at any level.

We can construct another hierarchy called \emph{Low Hierarchy} \cite{schoning1983low}. Let $\P(D)$ be the set of decision problems solved by a polynomial-time machine with access to the oracle solving the problem $D$. Analogously let $\NP(D)$ be the set of decision problems solved by a non-deterministic polynomial-time machine with access to the oracle solving the problem $D$. We can extend the definitions above to oracles solving a class $\Class$ of problems taking the union $\NP(\Class)=\bigcup_{D\in \Class}\NP(D)$. Now define $\Sigma_0(\Class)=\P(\Class)$ and $\Sigma_{k}^P(\Class)=\NP(\Sigma_{k-1}^P(\Class))$. A decision problem $D$ is said \emph{low for the level $k$}, in symbols $D\in \L^P_k$, if $\Sigma_{k}^P(D)=\Sigma_{k}^P$.

\subsubsection{Random Self-Reductions}
\label{Subsect:rsr}
We observe that in \cite{feigenbaum1993random} most of the definitions are based on decision problems. In this work, we deal both with decision and search problems, and the core concept of \emph{random self-reduction}, introduced in \cite{feigenbaum1993random}, is adapted to our case.

For a binary string $x$, set $|x|=\lambda$ and with $\mathfrak{r}$ we denote the randomness used in the following reduction. The length of $\mathfrak{r}$ is a polynomially bounded function $\omega$ of $\lambda$, that is $|\mathfrak{r}|=\omega(\lambda)$. Also $k=k(\lambda)$ is a polynomially bounded function.
\begin{Def}
	\label{Def:srs}
	Let $L$ be a search or decision problem. If $x$ is an instance for $L$, we denote with $F_L(x)$ the function solving $L$ on $x$. The problem $L$ is said \emph{nonadaptively $k$-random self-reducible} if there are polynomial computable functions $\sigma$ and $\phi$ such that
	\begin{itemize}
		\item[-] for any $\lambda$ and $x\in\{0,1\}^\lambda,$ the reduction is correct with a good probability:
		$$F_L(x) = \phi(x,\mathfrak{r},F_L(\sigma(1,x,\mathfrak{r})), \dots, F_L(\sigma(k,x,\mathfrak{r}))$$
		for at least $\frac34$ of all $\mathfrak{r}$ in $\{0,1\}^{\omega(\lambda)}$;
		\item[-] for any $\lambda$ and any pair of instances $x_1,x_2\in\{0,1\}^\lambda$, if $\mathfrak{r}$ is chosen uniformly at random, then $\sigma(i,x_1,\mathfrak{r})$ and $\sigma(i,x_2,\mathfrak{r})$ are instances of $L$ for every $1\le i\le k$. Moreover, they are identically distributed for every $1\le i\le k$.
	\end{itemize}
\end{Def}
We briefly comment this definition: on input the round $i$ and the instance $x$, $\sigma$ generates another instance based on the randomness given. These new instances and the randomness $\mathfrak{r}$ are given to the reduction $\phi$ that, using an oracle on these random instaces and the initial one $x$, returns the answer for $x$.

The following result is proven in \cite{feigenbaum1993random} for decision problem.
\begin{Th}
	\label{Th:rsr}
	If a decision problem $L$ is complete for $\Sigma_i^P$ or $\Pi_i^P$ and it is nonadaptively $k$-random self-reducible, then the Polynomial Hierarchy collapses at level $i+2$, that is $\PH =\Sigma_{i+2}^P$.
\end{Th}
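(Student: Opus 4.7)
The plan is to use nonadaptive random self-reducibility to place $L$ into a non-uniform complexity class (morally $\Sigma_i^P/\mathrm{poly}\cap\Pi_i^P/\mathrm{poly}$), then propagate this bound to all of level $i$ via the completeness of $L$, and finally invoke a Karp--Lipton / Yap-style collapse theorem to force $\PH=\Sigma_{i+2}^P$.

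I would begin by exploiting the second clause of Definition \ref{Def:srs}: because $\sigma(j,x,\mathfrak{r})$ and $\sigma(j,x',\mathfrak{r})$ are identically distributed whenever $|x|=|x'|=\lambda$, the distribution of the subinstances produced by the reduction depends only on $\lambda$ and not on the specific input. Standard amplification -- independent repetition of $\phi$ followed by a majority vote -- boosts the per-instance correctness from $3/4$ to $1-2^{-2\lambda}$. A union bound over the $2^\lambda$ inputs of length $\lambda$ then yields a single polynomial-length sequence of randomness strings $\mathfrak{r}_1,\ldots,\mathfrak{r}_m$ that makes $\phi$ answer correctly on every length-$\lambda$ input simultaneously. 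Embedding this sequence in the advice, together with $\Sigma_i^P$- or $\Pi_i^P$-certificates for the relevant subinstance values of $F_L$, places $L$ in the desired non-uniform class.

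Since $L$ is complete for $\Sigma_i^P$ (the $\Pi_i^P$ case is dual), every problem at level $i$ inherits the non-uniform upper bound, so $\Sigma_i^P\subseteq\Sigma_i^P/\mathrm{poly}\cap\Pi_i^P/\mathrm{poly}$. Yap's theorem -- the higher-level analogue of the classical Karp--Lipton collapse $\NP\subseteq\co\NP/\mathrm{poly}\Rightarrow\PH=\Sigma_3^P$ -- then forces $\PH=\Sigma_{i+2}^P$, which is the desired conclusion.

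The main obstacle is that the advice cannot simply enumerate the values of $F_L$ on all subinstances produced by the reduction, because those subinstances depend on the particular input $x$ rather than only on $\lambda$. The resolution is to provide the $\Sigma_i^P$- or $\Pi_i^P$-certificates for each subinstance query on the fly, either within a hybrid interactive protocol or by advising a sufficiently rich certificate-generating oracle; ensuring soundness against a dishonest certifier is where the instance-hiding property becomes essential, as it guarantees that no strategy can consistently mislead the verifier across the exponentially many possible inputs. A secondary subtlety is that Definition \ref{Def:srs} asserts only identical marginals of each $\sigma(j,x,\mathfrak{r})$ rather than joint identical distribution of the whole $k$-tuple, so the amplification step must be carried out over independent copies of $\mathfrak{r}$ for the Chernoff-style concentration to remain valid.
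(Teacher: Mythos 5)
Note first that the paper does not prove this theorem at all: it is quoted from Feigenbaum--Fortnow \cite{feigenbaum1993random}, so the comparison below is with that proof. Your overall skeleton (establish a non-uniform containment $\Sigma_i^P\subseteq \Pi_i^P/\mathrm{poly}$ and then invoke the Yap-style generalisation of Karp--Lipton to collapse $\PH$ to $\Sigma_{i+2}^P$) matches the second half of the cited argument, and that last step is fine. The genuine gap is in how you reach the non-uniform containment. The Adleman-style amplification plus union bound indeed lets you hard-wire into the advice a tuple of random strings on which $\phi$ is correct for \emph{every} input of length $\lambda$ (and, incidentally, this step does not use the identical-distribution clause at all). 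But to evaluate $\phi$ you still need the oracle values $F_L(\sigma(j,x,\mathfrak{r}))$, and these queries depend on $x$, so neither the answers nor certificates for them can live in the advice --- as you yourself observe. Your proposed fix, supplying certificates ``on the fly'' from a prover/certifier, only covers one sign of answer: if $L\in\Sigma_i^P$, a claimed ``yes'' on a query can be certified, but a claimed ``no'' has no $\Sigma_i^P$-certificate, and nothing in your sketch stops a dishonest prover from declaring ``no'' on yes-queries. The instance-hiding flavour of the reduction gives no soundness here: the prover sees the actual queries and can lie about them whether or not they statistically hide $x$. Without a soundness mechanism, answering the queries honestly merely places $L$ in $\P^{\Sigma_i^P}=\Delta_{i+1}^P$, which is trivially true and yields no collapse.

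The missing idea --- and the technical heart of the Feigenbaum--Fortnow proof --- is a statistical counting check that exploits precisely the clause you set aside: since the distribution of each query $\sigma(j,x,\mathfrak{r})$ depends only on $\lambda=\abs{x}$, the advice can contain the probabilities $p_j=\mathbf{P}_{\mathfrak{r}}[\sigma(j,x,\mathfrak{r})\in L]$ (equivalently, the expected number of yes-queries). The verifier then runs polynomially many independent copies of the reduction in parallel, the prover must declare an answer for every query and certify every declared ``yes'', and the verifier accepts only if the total number of declared yes-answers is at least roughly the advised expectation; Chernoff concentration shows that a prover who flips even a small fraction of yes-answers to ``no'' is caught with high probability, while certified yes-answers cannot be forged. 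This yields a sound interactive protocol with advice for $\overline{L}$ (relative to a $\Sigma_{i-1}^P$ oracle), hence $\Sigma_i^P\subseteq\Pi_i^P/\mathrm{poly}$ via the relativised $\AM/\mathrm{poly}\subseteq\NP/\mathrm{poly}$ containment, and only then does the Yap-type theorem give $\PH=\Sigma_{i+2}^P$. As written, your proposal names the obstacle but does not supply this mechanism, so the central step of the proof is missing.
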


The class $\NP$ corresponds to $\Sigma_1^P$ and we have the following corollary, originally stated in \cite{feigenbaum1993random} for decision problem, but adapted in \cite{feigenbaum1990locally} for generic \NP-hard problems.
\begin{Cor}\label{Cor:PH}
	If an \NP-hard problem is nonadaptively $k$-random-self-reducible, then the Polynomial Hierarchy collapses at the third level: $\PH =\Sigma_{3}^P$.
\end{Cor}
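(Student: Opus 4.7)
The plan is to bootstrap Theorem~\ref{Th:rsr} to the $\NP$-hard setting by exploiting the many-one reduction from an $\NP$-complete problem to $L$ that is guaranteed by $\NP$-hardness. Fix any $\NP$-complete problem $L_0$ (say SAT) together with a polynomial-time many-one reduction $f\colon L_0 \to L$. Composing $f$ with the random self-reduction of $L$ gives a procedure that, on input $\phi$, produces $k$ random $L$-instances $\sigma(i, f(\phi), \mathfrak{r})$ and, using their $L$-membership values, decides $\phi \in L_0$ correctly with probability at least $3/4$ over $\mathfrak{r}$.

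Next, I would convert this randomised reduction into polynomial-size advice. For each input length $\lambda$, the advice is a polynomial-size list of $(y, F_L(y))$ pairs that, with overwhelming probability, answers all queries produced by the above procedure on every $\phi$ with $|f(\phi)| = \lambda$. The crucial ingredient is the identical-distribution property in Definition~\ref{Def:srs}: the distribution of queries does not depend on the input, so a single polynomially large sample (after amplifying the $3/4$ success probability by independent repetition) serves all inputs of the same length simultaneously via a union-bound argument. This places $L_0 \in \NP/\mathrm{poly} \cap \coNP/\mathrm{poly}$.

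Finally, I would invoke Yap's theorem: $\NP \subseteq \coNP/\mathrm{poly}$ implies $\PH = \Sigma_3^P$. Since $L_0$ is $\NP$-complete and $L_0 \in \coNP/\mathrm{poly}$ from the previous step, the hypothesis of Yap's theorem holds and the desired collapse follows. The main obstacle is the advice-size analysis in the middle step: one must carefully verify that a polynomial number of sampled instance--answer pairs suffices to handle \emph{every} input of a given length at once, which is the heart of the Feigenbaum--Fortnow technique and crucially exploits the identical-distribution condition guaranteed by Definition~\ref{Def:srs}.
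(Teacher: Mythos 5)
Your high-level skeleton is the right one, and it is essentially how the result you are asked to prove is established in the literature that the paper itself merely cites (\cite{feigenbaum1993random,feigenbaum1990locally}): compose the many-one reduction from an $\NP$-complete $L_0$ with the random self-reduction of $L$ to obtain a nonadaptive randomized reduction whose query distribution is independent of the input, place $L_0$ in an advice class, and finish with Yap's theorem ($\NP\subseteq\coNP/\mathrm{poly}\Rightarrow\PH=\Sigma_3^P$). The first and last steps are fine.

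The middle step, however, has a genuine gap: a polynomial-size advice list of sampled pairs $(y,F_L(y))$ cannot ``answer all queries produced by the procedure.'' The queries $\sigma(i,f(\phi),\mathfrak{r})$ range over an exponentially large set, and on any fixed input a freshly generated query will, with overwhelming probability, miss any precomputed polynomial sample; the identical-distribution condition of Definition~\ref{Def:srs} only says that the query distributions of different inputs coincide, not that the realized queries of a given input land inside a fixed sample, so no union bound of the kind you describe is available. Trying to repair this by nondeterministically choosing a randomness $\mathfrak{r}$ that maps $f(\phi)$ onto a stored query destroys soundness in the $\coNP/\mathrm{poly}$ direction (an adversarial choice can always come from the bad quarter of $\mathfrak{r}$'s for that particular input), and for $k\ge 2$ the joint distribution of the $k$ queries matters, not just the marginals. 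The actual Feigenbaum--Fortnow argument is different: the advice is (an estimate of) the expected number of queries that are yes-instances of $L$ at that length, the nondeterministic machine runs polynomially many independent copies of the reduction and guesses certificates for exactly the queries it claims are yes-instances, and a concentration/counting argument shows that a prover cannot misreport answers on more than a small fraction of queries without being caught. This is also where one needs the oracle answers to be certifiable (membership of the queried instances in an $\NP$ set, or, in the paper's search setting, a verifiable solution $g$ with $x=g\star y$), a requirement your plan never uses but which is essential; with that step substituted, the containment $\NP\subseteq\coNP/\mathrm{poly}$ and the appeal to Yap's theorem go through as you state.
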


Since it is widely believed that the Polynomial Hierarchy does not collapse at any level, no complete problems that are nonadaptively $k$-random-self-reducible should exist.

\subsubsection{Interactive Proofs and Arthur-Merlin Games}
\label{Subsect:IP-AM}
Now we deal with interactive protocols and proofs. We recall the definition of the class $\IP$, introduced in \cite{goldwasser1989knowledge}.
\begin{Def}\label{Def:IP}
	Let $f$ be an integer function. A decision problem $D$ is in $\IP[f(n)]$ if there exist two algorithms, \textsc{Prover} and \textsc{Verifier}, such that on common input $x$:
	\begin{enumerate}
		\item \textsc{Prover} and \textsc{Verifier} exchange $f(\abs{x})$ messages;
		\item \textsc{Verifier} is probabilistic and polynomial-time in $\abs{x}$;
		\item \textsc{Prover} is computationally unbounded;
		\item if $x$ is in $D$, then \textsc{Verifier} accepts with a probability of at least $\frac34$;
		\item if $x$ is not in $D$, then \textsc{Verifier} accepts with probability at most $\frac14$.
	\end{enumerate}
	The class $\IP$ is defined as $\IP=\bigcup_{k\ge 0}\IP[n^k]$.
\end{Def}
We can see that $\NP$ is contained in $\IP[1]$, interactive proofs with only one round of communication.\\
Another useful class consists in Arthur-Merlin games (or protocols) \cite{babai1988arthur}. In brief, an $\AM$ protocol is an interactive protocol where \textsc{Prover} sends only random strings.
We define $\AM=\AM[2]$ and the reason will be given in the next theorem. The main difference between $\IP$ and $\AM$ is that in the latter, the randomness produced by \textsc{Verifier} is public, while for $\IP$ this is not needed. The next result synthesises some facts about $\IP$ and $\AM$ \cite{goldwasser1986private,babai1988arthur}.
\begin{Th}\label{Th:IP-AM}
	For every integer function $f$:
	\begin{itemize}
		\item $\AM[k] = \AM $ for every $k \ge 2$;
		\item $\IP[f(n)]\subseteq \AM[f(n)+2]$ for every positive integer $n$.
	\end{itemize}
\end{Th}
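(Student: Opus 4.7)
The plan is to prove each part via classical techniques from interactive proofs, which are standard results in complexity theory being synthesized here from the cited literature.

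\textbf{First claim.} For $\AM[k] = \AM$ when $k\ge 2$, I would follow Babai's round-collapse argument. The key ingredient is the inclusion that a Merlin-first then Arthur-second protocol is contained in $\AM$: starting from a two-round Merlin-Arthur protocol with completeness $c$ and soundness $s$, parallel repetition amplifies the gap to $(1-2^{-n},\,2^{-n})$. One then swaps the order by having Arthur publish his coins first. A yes-instance admits a Merlin message that works for almost all of Arthur's random strings, so one such message works for at least half of them; for a no-instance, a union bound over Merlin's (polynomially bounded) message space shows that no single message can work for more than a negligible fraction of Arthur's coins. Combined with the observation that consecutive moves by the same player can be merged into one (the unbounded Merlin sends both messages at once, Arthur flips all his coins at once), iterating swap-and-merge collapses any constant-round protocol to two rounds.

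\textbf{Second claim.} For $\IP[f(n)] \subseteq \AM[f(n)+2]$, I would apply the Goldwasser--Sipser simulation of private coins by public coins. The core tool is the set lower-bound protocol: a two-round $\AM$ protocol that, using pairwise-independent hashing, certifies $|S| \ge K$ up to a constant factor for any set $S$ with an efficient membership test. After fixing a partial transcript, the Verifier's next private-coin move is a random variable determined by his coins conditioned on the history, so the set of coin sequences consistent with the transcript and Merlin's claimed next response is an $\NP$-set whose size encodes the probability that the Verifier would have issued that message. Replacing each hidden-coin exchange by a public-coin exchange whose validity is certified by a set lower-bound call yields an $\AM$-style protocol; a careful round-scheduling argument shows the overall overhead is only two additional rounds.

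\textbf{Main obstacle.} The main difficulty lies in controlling the accumulation of errors. The set lower-bound protocol gives only a constant-factor approximation with some failure probability, and this error accumulates across all $f(n)$ simulated rounds, so one must first amplify the original protocol's completeness-soundness gap (for example by sequential or parallel repetition) before performing the conversion. In the $\AM$-collapse argument the analogous subtlety is ensuring that the amplified gap is large enough that the union bound in the swap step still yields soundness strictly less than $1/4$; this requires a quantitative trade-off between the amount of repetition and the length of Merlin's message that must be handled explicitly.
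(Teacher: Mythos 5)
Your outline is correct and is exactly the route the paper relies on: the paper does not prove Theorem \ref{Th:IP-AM} itself but cites it from the literature, namely the Babai--Moran round-collapse (swap-and-merge with amplification) for $\AM[k]=\AM$ and the Goldwasser--Sipser set lower-bound simulation of private coins for $\IP[f(n)]\subseteq\AM[f(n)+2]$, which are precisely the arguments you sketch. Your identification of the error-accumulation and amplification issues matches the standard treatment in those sources, so nothing further is needed.
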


Now we present results from \cite{boppana1987does,schoning1988graph}, dealing with the collapse of $\PH$.
\begin{Th}\label{Th:collapse}
	We have that $\NP\cap \coAM \subseteq \L_2^P$. Moreover, if $\NP\subseteq \coAM$, then $\PH$ collapses to the second level.
\end{Th}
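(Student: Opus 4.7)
The plan is to derive the second assertion from the first by specialising to $\textsf{SAT}$, so the main work lies in establishing the lowness claim $\NP\cap\coAM\subseteq\L_2^P$. Fix $L\in\NP\cap\coAM$ and $A\in\Sigma_2^P(L)$, given by $x\in A\Leftrightarrow \exists y\,\forall z\,M^L(x,y,z)=1$ for a polynomial-time oracle machine $M$ making $q=\mathrm{poly}(|x|)$ queries $w_1,\dots,w_q$. I would augment the outer $\exists y$ to also guess the answer string $b\in\{0,1\}^q$ and, for every index $i$ with $b_i=1$, an $\NP$-witness $u_i$ certifying $w_i\in L$ (available because $L\in\NP$); simulating $M^b(x,y,z)$ and checking the positive certificates is then polynomial-time. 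The task reduces to certifying every negative answer $b_i=0$ inside the $\Sigma_2^P$ template, which is where the $\coAM$ assumption enters.

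For the negative queries, I would take the two-message Arthur--Merlin protocol for $\overline{L}\in\AM$, amplify its error below $2^{-q-|x|}$ so that a single union bound covers all $q$ queries, and then apply the Sipser--G\'acs--Lautemann shift trick: for each $i$ with $b_i=0$ there exist polynomially many short shift strings $t_{i,j}$ (also guessed together with $y,b,u$) so that $w_i\notin L$ is equivalent to a statement of the form ``for every Arthur-random $r_i$ there is a shift index $j$ and a prover response making the amplified verifier accept on $(w_i,r_i\oplus t_{i,j})$''. The surviving inner existential over the prover response is then folded into the $\forall$ via the standard collapse of constant-round Arthur--Merlin protocols to two rounds \cite{babai1988arthur}, yielding an equivalent description $x\in A\Leftrightarrow \exists(y,b,u,t)\,\forall(z,r)\,P(x,y,b,u,t,z,r)$ for a polynomial-time $P$. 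This places $A$ in $\Sigma_2^P$ and hence $L\in\L_2^P$. The main obstacle is precisely this quantifier bookkeeping: without the amplification below the union-bound threshold, the shift trick leaves a residual alternation and only delivers $\Sigma_3^P$, while the Babai--Moran collapse is what makes the accounting close up in $\Sigma_2^P$.

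For the collapse, assume $\NP\subseteq\coAM$. Then $\textsf{SAT}\in\NP\cap\coAM$, so the first part gives $\textsf{SAT}\in\L_2^P$, i.e.\ $\Sigma_2^P(\textsf{SAT})=\Sigma_2^P$. Since $\textsf{SAT}$ is $\NP$-complete, $\NP^{\textsf{SAT}}=\Sigma_2^P$ and therefore $\Sigma_2^P(\textsf{SAT})=\NP^{\Sigma_2^P}=\Sigma_3^P$; combining, $\Sigma_3^P=\Sigma_2^P$, and the standard collapse criterion recalled in Section~\ref{Subsect:PH} forces $\PH=\Sigma_2^P$.
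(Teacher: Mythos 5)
Your derivation of the second assertion from the first is fine (if $\NP\subseteq\coAM$ then $\mathrm{SAT}\in\NP\cap\coAM$, lowness gives $\Sigma_2^P=\Sigma_2^P(\mathrm{SAT})=\Sigma_3^P$, and the collapse criterion of Section \ref{Subsect:PH} finishes), and note the paper itself does not prove this theorem but imports it from \cite{boppana1987does,schoning1988graph}. The gap is in your proof of the lowness claim. In a $\Sigma_2^P(L)$ computation $x\in A\Leftrightarrow\exists y\,\forall z\,M^L(x,y,z)=1$, the oracle queries $w_1,\dots,w_q$ are computed by $M$ from all of $x,y,z$ (and adaptively from earlier answers); they are not determined by $x$ and $y$ alone. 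So the answer string $b$, the witnesses $u_i$ and the shift strings $t_{i,j}$ cannot be guessed in the outer existential block: a single tuple $(b,u,t)$ would have to certify the oracle answers for every one of the exponentially many $z$ simultaneously, and the shifts required by the Sipser--G\'acs--Lautemann argument must cover the accepting-randomness set of the particular query $w_i$, which is not even defined until $z$ is fixed. Hence your claimed normal form $\exists(y,b,u,t)\,\forall(z,r)\,P$ is not equivalent to membership in $A$. Moreover, even per fixed query the step ``fold the surviving inner existential into the $\forall$ via $\AM[k]=\AM$'' is not a legitimate operation: the Babai--Moran collapse concerns rounds of a bounded-error Arthur--Merlin protocol, not the deletion of an existential quantifier sitting under a universal one; what you actually have at that point is an $\exists\forall\exists$ predicate, i.e.\ a $\Sigma_3^P$ shape, which is precisely the obstacle you set out to avoid.

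The standard (Sch\"oning-style) argument does the bookkeeping in the opposite order. Write $x\in A\Leftrightarrow\exists y\,(x,y)\in B$ where the complement of $B$ lies in $\NP(L)$. Show $\overline{B}\in\AM$: a nondeterministic computation with oracle $L$ is simulated by guessing, \emph{inside} the protocol, the string $z$, the query answers, $\NP$-witnesses for the positive answers and Merlin's responses in the amplified $\AM$ protocol for $\overline{L}$ on the negative answers; the amplification below $2^{-\mathrm{poly}}$ is what allows all these existential guesses to be pulled through the probabilistic quantifier by a union bound (closure of $\AM$ under polynomial conjunction and under the $\exists$ operator). Then $\AM\subseteq\Pi_2^P$ (e.g.\ via perfect completeness) gives $B\in\coAM\subseteq\Sigma_2^P$, and the outer $\exists y$ merges with the existential over Arthur's coins, so $A\in\Sigma_2^P$. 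Note that in this correct arrangement the outer existential carries $(y,r)$ with $r$ Arthur's randomness, while the universal quantifier ranges over $z$ and all the guessed certificate data --- essentially the reverse of the quantifier placement in your sketch. Your proposal contains the right ingredients (guessed answers, $\NP$-witnesses, amplification, the shift trick), but as arranged it does not compile into a $\Sigma_2^P$ predicate.
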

In the following sections, we will show that some problems cannot be $\NP$-hard unless $\PH=\Sigma_{2}^P$ or $\PH=\Sigma_{3}^P$.

\section{Problems from Group Actions}
\label{Sect:ProblemsGA}

We consider group actions $(G,X,\star)$ where the length of the bit-string representation of $X$ (or $G$) is polynomial in the length of the representation of $G$ (or $X$, respectively). Moreover, in this section we assume that, in the definition of effective group action, all the operations involving $G$ and $X$ can be performed in polynomial time (instead of probabilistic polynomial time, see Definition \ref{Def:Eff}).

\subsection{Regular Group Actions}
\label{Subsect:RegAct}

From now on, we refer to $\lambda$ as the complexity parameter indexing $G$ and $X$. For instance, $\lambda$ can be the length of the bit-string representation of $G$ and $X$. We define the following computational problems.
\begin{Def}
	Let $(X,G,\star)$ be a group action. Given $x$ and $y$ in $X$, find, if there exists, an element $g\in G$ such that $x=g\star y$. This problem is called \emph{Group Action Inverse Problem} ($\GAIP$).
\end{Def}
We can see that the $\GAIP$ models in the worst-case the cryptographic concept of one-way group action. In fact, if there exists a polynomial-time algorithm that solves $\GAIP$, then it will break the One-way group action assumption.

\begin{Def}
	Let $(X,G,\star)$ be a group action and let $q$ be an integer polynomial. Given the set of pairs $\{(x_i,y_i) \}_{i=1}^{q(\lambda)}$, find, if there exists, an element $g\in G$ such that $x_i=g\star y_i$ for every $i$. This problem is called \emph{Multiple Group Action Inverse Problem} ($\mGAIP$).
\end{Def}

Even in this case, $\mGAIP$ models the worst-case scenario of a $(\mathcal{D}_G,\mathcal{D}_X)$-weakly unpredictable group action.

\begin{Def}
	Let $(X,G,\star)$ be a group action and let $q$ be an integer polynomial. Let $L_{g}$ be the set indexed by $g\in G$ containing all the pairs $(x,g\star x)$. Given the set of pairs $\{(x_i,y_i) \}_{i=1}^{q(\lambda)}$, decide if it is a subset of $L_{g}$ for some $g$ or it is sampled uniformly from $X\times X$. This problem is called \emph{Pseudorandom Group Action Inverse Problem} ($\pGAIP$).
\end{Def}

Finally, $\pGAIP$ is the worst-case scenario of a $(\mathcal{D}_G,\mathcal{D}_X)$-weakly pseudorandom group action. Observe that both $\mGAIP$ and $\pGAIP$ are polynomially reducible to $\GAIP$.

\begin{Th}\label{Th:problemsRsr}
	Let $(G,X,\star)$ be a regular effective group action. Then
	\begin{enumerate}
		\item Group Action Inverse Problem,
		\item Multiple Group Action Inverse Problem,
		\item Pseudorandom Group Action Inverse Problem
	\end{enumerate}
	are nonadaptively 1-random self-reducible.
\end{Th}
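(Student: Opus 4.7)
The plan is to use a single oracle query ($k=1$) for each of the three problems, relying on the following basic consequence of regularity: for any fixed $x \in X$ the map $g \mapsto g \star x$ is a bijection $G \to X$, so sampling $h$ uniformly from $G$ yields $h \star x$ uniformly in $X$, and $h$ can be recovered from $h \star x$ in polynomial time using the effective-action primitives of Definition~\ref{Def:Eff}. This is the only ingredient I would need for all three reductions.

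For \GAIP, I would parse the randomness $\mathfrak{r}$ as two independent uniform elements $h_1, h_2 \in G$ and form the single query $\sigma(1, (x,y), \mathfrak{r}) := (h_1 \star x,\; h_2 \star y)$. By the observation above, this pair is uniform over $X \times X$ independently of $(x,y)$, which is exactly what the second clause of Definition~\ref{Def:srs} demands. From the oracle's answer $g'$ satisfying $h_1 \star x = g' \star (h_2 \star y)$, the combiner $\phi((x,y), \mathfrak{r}, g') := h_1^{-1} g' h_2$ gives a valid \GAIP\ solution on the original instance, with perfect correctness (so well above the $3/4$ threshold).

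For \mGAIP\ and \pGAIP\ I would use the same pair $(h_1, h_2)$ across all $q(\lambda)$ coordinates, sending $\{(x_i, y_i)\}_i$ to $\{(h_1 \star x_i,\; h_2 \star y_i)\}_i$. A one-line calculation shows that a common \mGAIP\ solution $g$ of the input becomes the common solution $h_1 g h_2^{-1}$ of the randomised instance, so the recovery $g := h_1^{-1} g' h_2$ applies verbatim; and for \pGAIP\ the class $L_g$ is sent to $L_{h_1 g h_2^{-1}}$ while the uniform-over-$X\times X$ alternative is preserved (since $h_1, h_2$ act as permutations on $X$), so the oracle's decision bit can be returned as-is.

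The main obstacle will be checking the identical-distribution requirement in the multi-pair variants, since all $q(\lambda)$ randomised pairs are correlated through the single $(h_1, h_2)$. I would handle it by parametrising the relevant instance class — for \mGAIP\ the YES instances are parametrised by $(g, y_1, \dots, y_{q(\lambda)}) \in G \times X^{q(\lambda)}$ via $(g, y_1, \dots, y_{q(\lambda)}) \mapsto \{(g \star y_i, y_i)\}_i$, and analogously for the two modes of \pGAIP — and observing that the randomisation induces the parameter change $(g, y_1, \dots, y_{q(\lambda)}) \mapsto (h_1 g h_2^{-1},\, h_2 \star y_1, \dots, h_2 \star y_{q(\lambda)})$, which is uniform on the parameter space. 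Making this bookkeeping match the exact wording of Definition~\ref{Def:srs} and feed cleanly into Corollary~\ref{Cor:PH} is where I expect the real care to be.
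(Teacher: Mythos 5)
Your constructions are essentially the ones in the paper: for $\GAIP$ your $\sigma$ and $\phi$ (two independent uniform translations $h_1,h_2$, recovery $h_1^{-1}g'h_2$) coincide with the paper's, and for $\mGAIP$ and $\pGAIP$ you translate the left and right coordinates by two independent elements where the paper uses a single element on both coordinates (with conjugation recovery $g^{-1}g'g$); either variant maps instances to instances and gives perfect correctness, so up to that cosmetic difference the route is the same one-query randomisation-by-translation argument.

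The one step that would fail as stated is your proposed handling of the distribution requirement for the multi-pair problems. The induced parameter change $(g,y_1,\dots,y_{q(\lambda)})\mapsto(h_1gh_2^{-1},\,h_2\star y_1,\dots,h_2\star y_{q(\lambda)})$ is \emph{not} uniform on $G\times X^{q(\lambda)}$ when $q(\lambda)\ge 2$: the element $h_2$ acts diagonally, so the tuple $(h_2\star y_1,\dots,h_2\star y_{q(\lambda)})$ ranges only over the diagonal orbit of $(y_1,\dots,y_{q(\lambda)})$, which has size $\abs{G}=\abs{X}$, far smaller than $\abs{X}^{q(\lambda)}$; concretely, coincidences $y_i=y_j$ (and the pattern of the transporters $\delta(y_i,y_j)$) are preserved by the randomisation, so the distribution of the randomised instance still depends on the input instance and uniformity over the parametrised class is simply false. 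Hence this bookkeeping cannot deliver the literal instance-independence clause of Definition \ref{Def:srs} for $\mGAIP$ and $\pGAIP$. For what it is worth, the paper's own verification of that clause for these two problems only checks the marginal distribution of each single randomised pair (the computations \eqref{eq:unif1} and \eqref{eq:unif2}), and your two-element randomisation passes that check even more directly, since each pair $(h_1\star x_j,h_2\star y_j)$ is exactly uniform on $X\times X$; so the repair is to drop the ``uniform on the parameter space'' claim and argue the distribution condition at the level the paper does, rather than via the parametrisation you describe.
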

\begin{proof}
	\begin{enumerate}
		\item We denote with $F_{\GAIP}$ the function that solves the $\GAIP$  problem: on input $(x,y)$, $F_{\GAIP}(x,y)$ is an element of $G$ such that $x = F_{\GAIP}(x,y)\star y$. We show that there exist polynomial computable functions $\sigma$ and $\phi$ that satisfy Definition \ref{Def:srs}. \\
		Let $x,y\in X$, since the action is regular, there exists $g$ for which $g\star x=y$. Let $\mathfrak{r}$ be the randomness string of length polynomial in $\lambda=\abs{x}$. Set $\sigma$ as the function whose, on inputs $(i,(x,y),\mathfrak{r})$, deterministically generates elements $g_{x,i}$ and $g_{y,i}$ from $\mathfrak{r}$ and $i$, and outputs $(g_{x,i}\star x, g_{y,i}\star y)$. By construction $\sigma(i,(x,y),\mathfrak{r})$ is an instance of $\GAIP$. Then define the function $\phi$ that on inputs $((x,y),\mathfrak{r}, F_\GAIP(\sigma(1,(x,y),\mathfrak{r})))$, generates $g_{x,1}$ and $g_{y,1}$ from $\mathfrak{r}$, and returns $g_{x,1}^{-1} \cdot F_\GAIP(\sigma(1,(x,y),\mathfrak{r})) \cdot g_{y,1}$. If we denote with $x_1,y_1$ the output of $\sigma(1,(x,y),\mathfrak{r})$, we can verify
		\begin{align*}
			x &= g_{x,1}^{-1} \star x_1  \\
			& = g_{x,1}^{-1} \star (F_\GAIP(x_1,y_1) \star y_1) = (g_{x,1}^{-1} \cdot F_\GAIP(x_1,y_1)) \star (g_{y,1} \star y)  \\
			&= (g_{x,1}^{-1} \cdot F_\GAIP(x_1,y_1) \star \cdot g_{y,1}) \star y
		\end{align*}
		and the reduction is correct with probability 1. By construction, if $\mathfrak{r}$ is chosen uniformly at random, then the random variables $(\tilde{x},\tilde{y})=\sigma(1,(x,y),\mathfrak{r})$ and $(\tilde{x}',\tilde{y}')=\sigma(1,(x',y'),\mathfrak{r})$ are identically distributed. Indeed let $z,w\in X$, then
		\begin{align*}
			\mathbf{P}[\tilde{x}=z,\tilde{y}=w] &= \mathbf{P}[g_{x,i}=\delta(z,x),g_{y,i}=\delta(w,y)]\\
			& = \frac{1}{\abs{G}^2} = \mathbf{P}[g_{x',i}=\delta(z,x'),g_{y',i}=\delta(w,y')] \\
			&= \mathbf{P}[\tilde{x}'=z,\tilde{y}'=w],
		\end{align*}
		where the second and fourth equalities use the fact that $(g_{x,i},g_{y,i})$ and $(g_{x',i},g_{y',i})$ are picked independently from $(x,y)$ and $(x',y')$.
		This implies that $\GAIP$ is nonadaptively 1-random self-reducible.
		
		\item Suppose $F_{\mGAIP}$ is the function that solves $\mGAIP$. This reduction is very similar to the one for $\GAIP$. Given $Q=\{(x_j,y_j) \}_{j=1}^{q(\lambda)}$, where $q$ is a polynomial, on input $(i,Q,\mathfrak{r})$, $\sigma$ generates $g_{Q,i}$ and sets
		$\tilde{x}_j=g_{Q,i} \star x_j$ and $\tilde{y}_j=g_{Q,i} \star y_j$
		for each $j=1,\dots, q(\lambda)$, then outputs $\{\tilde{x}_j, \tilde{y}_j\}_{j=1}^{q(\lambda)}$. By construction $\sigma(i,Q,\mathfrak{r})$ is an instance of $\mGAIP$. The function $\phi$, on inputs $(Q,\mathfrak{r}, F_\mGAIP(\sigma(1,Q,\mathfrak{r})))$, retrieves $g_{Q,1}$ and outputs $g_{Q,1}^{-1}\cdot F_\mGAIP(\sigma(1,Q,\mathfrak{r}))) \cdot g_{Q,1}$. Let $\tilde{Q}=\sigma(1,Q,\mathfrak{r})=\{\tilde{x}_j, \tilde{y}_j\}_{j=1}^{q(\lambda)}$, then
		\begin{align*}
			(g_{Q,1}^{-1}\cdot F_\mGAIP(\tilde{Q})) \cdot g_{Q,1}) \star y_i &= (g_{Q,1}^{-1}\cdot F_\mGAIP(\tilde{Q})) \star (g_{Q,1} \star y_j)  \\
			& = g_{Q,1}^{-1}\star \left( F_\mGAIP(\tilde{Q})) \star (g_{Q,1} \star y_j)\right) \\
			&= g_{Q,1}^{-1}\star (g_{Q,1}\star x_j) = x_j
		\end{align*}
		for every $j=1,\dots,q(\lambda)$.
		Given $Q_1$ and $Q_2$ of the same form of $Q$ above, if $\mathfrak{r}$ is chosen uniformly random, then $\sigma(i,Q_1,\mathfrak{r})$ and $\sigma(i,Q_2,\mathfrak{r})$ are identically distributed: let $z,w\in X$, then for every $(x_j,y_j)$ in $Q_1$ and $(x'_j,y'_j)$ in $Q_2$ we have
		\begin{equation}\label{eq:unif1}
			\begin{aligned}
				\mathbf{P}[\tilde{x}_j = z,\tilde{y}_j = w] 	&=\mathbf{P}[g_{Q_1,i} =\delta(z,x_j),g_{Q_1,i} =\delta(w,y_j)]  \\
				& =\mathbf{P} 	[g_{Q_2,i}=\delta(z,x'_j),g_{Q_2,i} =\delta(w_j,y'_j)] \\
				& = \mathbf{P}[\tilde{x}'_j= z,\tilde{y}'_j = 	w] ,
			\end{aligned}
		\end{equation}
		where we identify with $\{\tilde{x}_j, \tilde{y}\}_{j=1}^{q(\lambda)}$ and $\{\tilde{x}'_j,\tilde{y}'\}_{j=1}^{q(\lambda)}$ the output of $\sigma(i,Q_1,\mathfrak{r})$ and $\sigma(i,Q_2,\mathfrak{r})$, respectively.\\			
		We conclude that $\mGAIP$ is nonadaptively 1-random self-reducible.
		
		\item Suppose $F_{\pGAIP}$ is the function that decides $\pGAIP$. Given $Q=\{(x_j,y_j)\}_{j=1}^{q(\lambda)}$, we have $F_{\pGAIP}(Q)=1$ if there exists $g\in G$ such that $x_j=g\star y_j$ for every $j=1,\dots,q(\lambda)$, while $F_{\pGAIP}(Q)=0$ if $x_j$ and $y_j$ are sampled from the uniform distribution over $X$.\\
		Given the instance $Q=\{(x_j,y_j) \}_{j=1}^{q(\lambda)}$, on input $(i,Q,\mathfrak{r})$, $\sigma$ generates $g_{Q,i}$ and sets
		$\tilde{x}_j=g_{Q,i} \star x_j$ and $\tilde{y}_j=g_{Q,i} \star y_j$
		for each $j=1,\dots, q(\lambda)$, then outputs $\{\tilde{x}_j, \tilde{y}_j\}_{j=1}^{q(\lambda)}$. We can see that $\sigma(i,Q,\mathfrak{r})$ is an instance of $\pGAIP$: if $F_\pGAIP(Q)=1$, then it fits in the definition. Otherwise when $F_\pGAIP(Q)=0$, then the output of $\sigma$ is a set of uniformly picked couple of elements in $X$. Let $z,w\in X$, then since $\pi_{g_{Q,i}}$ is a permutation,
		
		\begin{equation}\label{eq:unif2}
			\begin{aligned}
				\mathbf{P}[\tilde{x}_j = z,\tilde{y}_j = w] &= \mathbf{P}[g_{Q,i}\star x_j =z,g_{Q,i}\star y_j =w] \\
				&= \mathbf{P}[x_j = z, y_j=w] = \frac{1}{\abs{X}}\cdot\frac{1}{\abs{X}}.
			\end{aligned}
		\end{equation}
		The function $\phi$, on inputs $(Q,\mathfrak{r}, F_\pGAIP(\sigma(1,Q,\mathfrak{r})))$, simply outputs \\$F_\pGAIP(\sigma(1,Q,\mathfrak{r}))$. We prove that the reduction is correct: let 
		$$\tilde{Q}=\sigma(1,Q,\mathfrak{r})=\{\tilde{x}_j, \tilde{y}_j\}_{j=1}^{q(\lambda)},$$
		then
		\begin{itemize}
			\item if $F_\pGAIP(\tilde{Q})=1$, then there exists $\tilde{g}$ such that $\tilde{x}_j=\tilde{g}\star \tilde{y}_j$ for every $j$. Then we have that $x_j = (g_Q^{-1}\cdot \tilde{g} \cdot g_Q) \star y_j$ and this implies $F_\pGAIP(Q)=1$.
			\item If $F_\pGAIP(\tilde{Q})=0$, then $\tilde{x}_j, \tilde{y}_j$ are sampled from the uniform distribution over $X$, and so are $g_Q^{-1}\star \tilde{x}_j = x_j$ and $g_Q^{-1}\star \tilde{y}_j = y_j$. This implies $F_\pGAIP(Q)=0$.
		\end{itemize}
		Given $Q_1$ and $Q_2$ of the same form of $Q$ above, if $\mathfrak{r}$ is chosen uniformly random, then $\sigma(1,Q_1,\mathfrak{r})$ and $\sigma(1,Q_2,\mathfrak{r})$ are identically distributed due to equations \eqref{eq:unif1} and \eqref{eq:unif2}. Then $\pGAIP$ is nonadaptively 1-random self-reducible.
	\end{enumerate}
	
\end{proof}

We point out that these problems fit also in the definition of \emph{information hiding schemes} (ihs) from \cite{abadi1989hiding}, in fact, both $\GAIP$, $\mGAIP$ and $\pGAIP$ are 1-ihs using the proof of Theorem \ref{Th:problemsRsr}.\\
Due to the previous result and to Corollary \ref{Cor:PH}, we can show the following fact.

\begin{Cor}
	If $\GAIP$, $\mGAIP$ or $\pGAIP$ are \NP-hard, the Polynomial Hierarchy collapses at the third level.
\end{Cor}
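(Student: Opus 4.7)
The plan is essentially a one-line deduction that chains together Theorem \ref{Th:problemsRsr} with Corollary \ref{Cor:PH}, so the proof proposal is really about making sure each hypothesis of the corollary is verified for each of the three problems.

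First I would fix any one of $\GAIP$, $\mGAIP$, $\pGAIP$ and assume, toward showing the collapse, that it is $\NP$-hard. Theorem \ref{Th:problemsRsr} directly supplies the second hypothesis needed for Corollary \ref{Cor:PH}: each of these problems is nonadaptively $1$-random self-reducible over a regular effective group action, with $k(\lambda)=1$ (a polynomially bounded function, trivially). Hence both hypotheses of Corollary \ref{Cor:PH} are met, and the conclusion $\PH = \Sigma_3^P$ follows immediately.

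The only mild subtlety worth flagging in the write-up is that $\GAIP$ and $\mGAIP$ are search problems, whereas Theorem \ref{Th:rsr} is originally stated for decision problems. This is exactly why Corollary \ref{Cor:PH} was phrased in the generic $\NP$-hard form, citing the adaptation of \cite{feigenbaum1990locally} to search problems; so applying it here is legitimate. For $\pGAIP$, which is already a decision problem, one could alternatively invoke Theorem \ref{Th:rsr} directly with $i=1$.

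There is no real obstacle: everything hard has already been done in proving Theorem \ref{Th:problemsRsr}. The role of this corollary is purely to record the complexity-theoretic consequence, so the proof in the paper should be at most a couple of sentences pointing to the two cited results and noting the search-problem caveat above.
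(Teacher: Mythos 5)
Your proposal is correct and matches the paper exactly: the corollary is obtained by combining Theorem \ref{Th:problemsRsr} (nonadaptive $1$-random self-reducibility of the three problems) with Corollary \ref{Cor:PH}, which is stated for generic $\NP$-hard problems precisely so that the search problems $\GAIP$ and $\mGAIP$ are covered. Your remark on the search-versus-decision caveat and the citation of \cite{feigenbaum1990locally} is the same justification the paper relies on, so nothing further is needed.
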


\subsection{Non-transitive Group Actions}
\label{Subsect:non-tr}

Now we consider non-transitive group actions where, given $x,y$ in $X$, we do not know if there exists $\delta(x,y)$. We want to analyse the complexity of a decisional variant of $\GAIP$, using the same technique of \cite{petrank1997code} for Code Nonequivalence Problem and \cite{goldreich1991proofs} for Graph Nonisomorphism Problem.

\begin{Def}
	Let $(G,X,\star)$ be an effective non-transitive group action. Given $x$ and $y$ in $X$ decide whether $\delta(x,y)$ exists. This problem is called \emph{Decisional Group Inversion Problem} ($\dGAIP$).
\end{Def}

It is easy to see that $\dGAIP$ is in $\NP$: if $(x,y)$ is a yes-instance, the element $\delta(x,y)$ is a witness.\\
This problem embraces many decision problems present in literature.
\begin{enumerate}
	
	\item Let $G=\mathcal{S}_{\{1,\dots,n\}}$ be the set of permutations of $n$ elements, and let $X=\mathcal{G}_n$ be the set of graphs with $n$ vertices. The action of $G$ over $X$ is defined as the permutation of the vertices of the graph. The $\dGAIP$ for the above group action is the \emph{Graph Isomorphism Problem}. Many other isomorphism problems fall in this framework, for example the \emph{Ring Isomorphism Problem} \cite{kayal2005ring} and the \emph{Polynomial Isomorphism Problem} \cite{faugere2006polynomial}.
	
	\item Let $\mathbb{F}_2$ be the field with two elements. Set $G=\textsc{GL}_k\left(\mathbb{F}_2\right)\times \mathcal{S}_{\{1,\dots,n\}}$, where $\textsc{GL}_k\left(\mathbb{F}_2\right)$ is the set of $k\times k$ invertible matrices over the field of two elements and $\mathcal{S}_{\{1\dots,n\}}$ be the set of permutation of $n$ elements represented by binary $n\times n$ matrices. Let $X=\mathcal{C}_{n,k}\left(\mathbb{F}_2\right)$ be the set of generator matrices of linear codes over $\mathbb{F}_2$ having length $n$ and dimension $k$. The action of $G$ over $X$ is defined as
	$$
	\begin{array}{rccl}
		\star:&G\times X&\to&X\\
		&\left((S,P),A\right)&\mapsto & (S,P)\star A = SAP.
	\end{array}
	$$
	The $\dGAIP$ for the above group action coincides with the \emph{Permutation Code Equivalence Problem}.

	\item If $\mathbf{G}$ is a graph with $n$ vertices and $\pi\in \mathcal{S}_{\{1,\dots,n\}}$, we denote with $\pi\left(\mathbf{G}\right)$ the graph with vertices permutated by $\pi$. Given a graph $\mathbf{G}$ with vertices $\{1,\dots,n\}$, the graph $\mathbf{G}^{(i)}$ is the subgraph with vertices $\{1,\dots,n\}\setminus \{i\}$. A \emph{deck} of $\mathbf{G}$ is a $n$-uple $D(\mathbf{G})=\left(\mathbf{G}^{(1)},\dots,\mathbf{G}^{(n)}\right)$ and we say that $\mathbf{G}$ is a \emph{reconstruction} of $D(\mathbf{G})$. Given a graph $\mathbf{G}$ and a sequence of graphs $D=\left(\mathbf{G}_1,\dots,\mathbf{G}_n\right)$, the problem of deciding whether $\mathbf{G}$ is a reconstruction of $D$ is called \emph{Deck Checking Problem} \cite{kobler1992graph,kratsch1994complexity}. Now let $G=\left(\mathcal{S}_{\{1,\dots,n\}}\right)^{n+1}$ and $X=\left(\mathcal{G}_{n-1}\right)^n$. Define $\star$ as follows
	$$
	\begin{array}{rccc}
		\star:&G\times X&\to&X\\
		&\left( (\sigma,\pi_1,\dots,\pi_n) , \left(\mathbf{G}_1,\dots,\mathbf{G}_n\right)\right)&\mapsto & \left(\pi_1\left(\mathbf{G}_{\sigma(1)}\right),\dots,\pi_n\left(\mathbf{G}_{\sigma(n)}\right)\right).
	\end{array}
	$$
	We can see the Deck Checking Problem as the instance $\left(D(\mathbf{G}),D\right)$ of $\dGAIP$ for the previous group action. This is not surprising since Deck Checking can be reduced to Graph Isomorphism.
		
	\item Now we admit that the canonical form for $X$ can be computed having access to an oracle for $\NP$. Let $G=\mathcal{S}_{\{1,\dots,n\}}$ and $X$ be the set of boolean functions on $n$ variables, the action is defined as the permutation of variables $x_1,\dots,x_n$. The $\dGAIP$ for this group action is called \emph{Boolean Isomorphism Problem} and it is in $\Sigma_{2}^P$ and $\coNP$-hard \cite{borchert1998computational}, but unlikely to be $\Sigma_{2}^P$-complete, unless $\PH=\Sigma_{3}^P$ \cite{agrawal1996boolean}.
\end{enumerate}

Due to the similarity to such known problems, we can prove the following theorem.

\begin{Prop}\label{Prop:dGAIPcoAM}
	Let $(G,X,\star)$ be an effective non-transitive group action. Then $\dGAIP$ is in $\coAM$.
\end{Prop}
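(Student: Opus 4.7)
The plan is to design a private-coin interactive proof for the complement of $\dGAIP$, modelled on the classical Goldreich--Micali--Wigderson protocol for Graph Non-Isomorphism (which, as noted in item 1 of the list above, is itself a special case of $\dGAIP$), and then invoke Theorem \ref{Th:IP-AM} to convert it into an Arthur--Merlin protocol. Recall that $(x,y)$ is a yes-instance of $\dGAIP$ iff $x$ and $y$ lie in the same $G$-orbit of $X$, so the complement asks whether $\mathrm{Orb}(x)\cap\mathrm{Orb}(y)=\emptyset$.

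The Verifier samples a uniform bit $i\in\{0,1\}$ and a uniform $g\in G$ (both available thanks to effectiveness), sets $z=g\star x$ if $i=0$ and $z=g\star y$ otherwise, sends $z$ to the Prover, and accepts iff the Prover returns $i'=i$. If $\mathrm{Orb}(x)\cap\mathrm{Orb}(y)=\emptyset$, the unbounded Prover reads $i$ off the orbit containing $z$ and always succeeds, giving completeness $1$. If the two orbits coincide, the orbit--stabilizer theorem gives that for any $w$ in that common orbit the number of $g\in G$ with $g\star x=w$ equals $|G|/|\mathrm{Orb}(x)|$, and the same count holds for $y$; hence $z$ is uniformly distributed on the common orbit regardless of $i$, and no Prover strategy beats probability $1/2$ of guessing $i$. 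Sequential repetition a polynomial number of times boosts this $1$ vs.\ $1/2$ gap to the $3/4$ vs.\ $1/4$ thresholds required by Definition \ref{Def:IP}, placing the complement of $\dGAIP$ in $\IP[2]$.

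By Theorem \ref{Th:IP-AM} we have $\IP[2]\subseteq \AM[4]=\AM$, so the complement of $\dGAIP$ lies in $\AM$ and therefore $\dGAIP\in\coAM$. The one delicate point is the soundness analysis: since the action is not assumed to be free, one might fear that $g\star x$ concentrates non-uniformly on $\mathrm{Orb}(x)$ when $g$ is sampled uniformly from $G$, but orbit--stabilizer eliminates this concern at no cost. Beyond this, the protocol only relies on sampling uniformly from $G$, computing the action, and testing equality of elements of $X$, all of which are guaranteed by the effective group action hypothesis.
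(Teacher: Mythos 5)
Your proposal is correct and follows essentially the same route as the paper: the same two-message private-coin protocol for the complement (Verifier randomizes one of the two inputs by a uniform $g\in G$, Prover identifies the orbit), the same repetition to meet the thresholds of Definition \ref{Def:IP}, and the same passage $\IP[2]\subseteq\AM[4]=\AM$ via Theorem \ref{Th:IP-AM}. Your explicit orbit--stabilizer justification that $g\star x$ is uniform on the common orbit (so soundness does not require freeness) is a welcome sharpening of a point the paper leaves implicit, but it is not a different argument.
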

\begin{proof}
	This proof uses the same technique used in Section III of \cite{petrank1997code}. We design an interactive protocol for $\overline{\dGAIP}$, the complement of $\dGAIP$.
	
	We can show that the protocol in Figure \ref{Fig:dGAIP} satisfies requirements in Definition \ref{Def:IP}. Given a couple $y_0,y_1$ such that they belong to $\overline{\dGAIP}$, \textsc{Verifier} will always accepts. Conversely, if $(y_0,y_1)$ does not belong to $\overline{\dGAIP}$, then there exists $h$ such that $y_0=h\star y_1$ and then \textsc{Prover} can only guess $b'$ since both $\delta(x,y_0)$ and $\delta(x,y_1)$ are in $G$. Then $b=b'$ happens with probability $\frac12$. Repeating the protocol a polynomial number of time gives us a probability at most $\frac14$ that \textsc{Verifier} accepts. Then $\overline{\dGAIP}$ is in $\IP[2]\subseteq \AM[4] \subseteq \AM$. This implies that $\dGAIP$ is in $\coAM$. 
	
\end{proof}

\begin{figure}
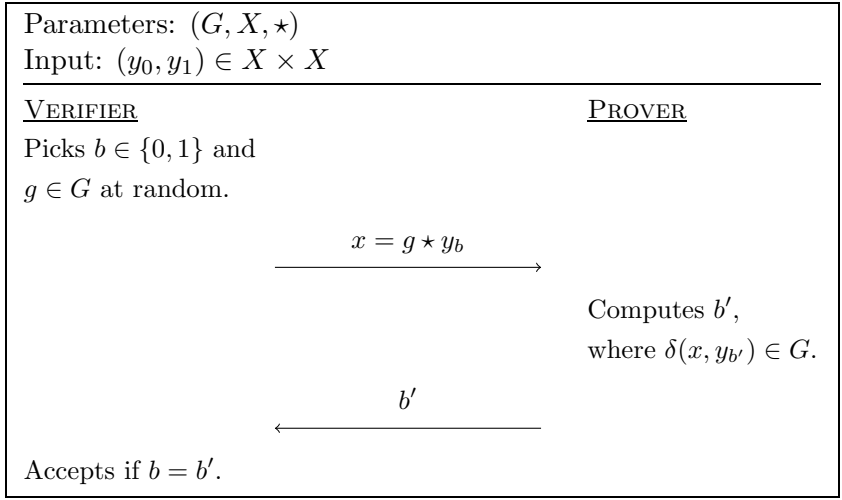

	
	\fbox{
		\procedure[colspace=0cm]{
			Parameters: $(G,X,\star)$\\
			Input: $(y_0,y_1)\in X \times X$}{
			\underline{\textsc{Verifier}} \< \< \underline{\textsc{Prover}}  \\
			\text{Picks $b\in\{0,1\}$ and} \< \< \\
			\text{$g\in G$ at random.} \< \< \\
			\< \sendmessageright*{x=g\star y_b} \< \\
			\< \< \text{Computes $b'$,} \< \< \\
			\<  \< \text{where $\delta(x,y_{b'})\in G$.}  \\
			\< \sendmessageleft*{b'} \< \\
			\text{Accepts if $b=b'$.} \< \< 
		}
	}
	\caption{Interactive protocol for $\overline{\dGAIP}$}
	\label{Fig:dGAIP}
\end{figure}

Using Theorem \ref{Th:collapse}, we can then state the following result.

\begin{Cor}\label{Cor:dGAIP}
	Let $(G,X,\star)$ be an effective non-transitive group action. If $\dGAIP$ is $\NP$-complete, the Polynomial Hierarchy collapses at the second level.
\end{Cor}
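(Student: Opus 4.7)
The plan is to chain together Proposition \ref{Prop:dGAIPcoAM} and the second part of Theorem \ref{Th:collapse}. Concretely, I would start by assuming $\dGAIP$ is $\NP$-complete, so in particular every language $L\in\NP$ admits a polynomial-time many-one reduction $f$ to $\dGAIP$. By Proposition \ref{Prop:dGAIPcoAM} there is an $\AM$ protocol deciding $\overline{\dGAIP}$ (the interactive protocol of Figure \ref{Fig:dGAIP}, converted via Theorem \ref{Th:IP-AM}).

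The next step is to observe that $\coAM$ is closed under polynomial-time many-one reductions: given $x$, \textsc{Verifier} first computes $f(x)$ and then runs the $\coAM$ protocol for $\dGAIP$ on $f(x)$. Since $x\in L$ iff $f(x)\in\dGAIP$, the acceptance probabilities carry over unchanged, so $L\in\coAM$. Because $L$ was arbitrary in $\NP$, this gives the inclusion $\NP\subseteq\coAM$.

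Finally, I invoke the second assertion of Theorem \ref{Th:collapse}: if $\NP\subseteq\coAM$, then the Polynomial Hierarchy collapses to its second level, i.e.\ $\PH=\Sigma_2^P$, which is exactly what we wanted to prove.

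There is no real obstacle here, since the work has already been done in Proposition \ref{Prop:dGAIPcoAM} and in the cited collapse theorem; the only thing worth being careful about is the closure of $\coAM$ under polynomial-time reductions, and making explicit that $\NP$-completeness (as opposed to mere $\NP$-hardness together with membership in $\NP$) is what lets us feed arbitrary $\NP$ instances through the protocol for $\dGAIP$.
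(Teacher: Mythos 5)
Your proposal is correct and follows exactly the route the paper intends: combine Proposition \ref{Prop:dGAIPcoAM} (so $\dGAIP\in\coAM$), use $\NP$-completeness and closure of $\coAM$ under polynomial-time many-one reductions to get $\NP\subseteq\coAM$, and then apply Theorem \ref{Th:collapse}. The paper simply states this in one line, so your write-up is just a more explicit version of the same argument.
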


Another clue that $\dGAIP$ is not $\NP$-complete comes from the fact that it is in $\L_2^P$, in fact, since $\dGAIP$ is in both $\NP$ and $\coAM$, using Theorem \ref{Th:collapse}, we have the following result.

\begin{Cor}\label{Cor:dGAIPlow}
	Let $(G,X,\star)$ be an effective non-transitive group action. Then $\dGAIP$ is in $\L_2^P$, i.e. the second class of the low hierarchy.
\end{Cor}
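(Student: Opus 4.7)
The plan is to simply combine the two containments already established for $\dGAIP$ together with Theorem \ref{Th:collapse}. Recall that Theorem \ref{Th:collapse} asserts the inclusion $\NP \cap \coAM \subseteq \L_2^P$, so it suffices to show that $\dGAIP$ belongs to both $\NP$ and $\coAM$.

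First I would recall that $\dGAIP$ is in $\NP$: for a yes-instance $(x,y)$ the element $\delta(x,y)\in G$ is a polynomial-size witness, and verifying that $x = \delta(x,y)\star y$ takes polynomial time since $(G,X,\star)$ is effective and the action is computable in polynomial time by the blanket assumption at the start of Section \ref{Sect:ProblemsGA}. This was in fact observed immediately after the definition of $\dGAIP$, so I would cite that observation rather than re-prove it.

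Next I would invoke Proposition \ref{Prop:dGAIPcoAM}, which gives $\dGAIP \in \coAM$ via the two-message Arthur--Merlin-style protocol for $\overline{\dGAIP}$ displayed in Figure \ref{Fig:dGAIP}. No additional work is needed here; the proposition is already established.

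Finally, combining $\dGAIP \in \NP \cap \coAM$ with Theorem \ref{Th:collapse} immediately yields $\dGAIP \in \L_2^P$, concluding the proof. There is no real obstacle to this corollary: all the substantive content lies in Proposition \ref{Prop:dGAIPcoAM} (constructing the interactive protocol) and in the external Theorem \ref{Th:collapse} of Boppana--H\r{a}stad--Zachos / Sch\"oning; the corollary itself is just a syntactic combination of the two.
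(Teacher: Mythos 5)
Your proposal is correct and follows exactly the paper's own route: it combines the observation that $\dGAIP\in\NP$ (made right after the definition of $\dGAIP$) with Proposition \ref{Prop:dGAIPcoAM} giving $\dGAIP\in\coAM$, and then applies the inclusion $\NP\cap\coAM\subseteq \L_2^P$ from Theorem \ref{Th:collapse}. Nothing is missing; the corollary is indeed just this syntactic combination.
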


We point out that Corollary \ref{Cor:dGAIPlow} has the same consequences of Proposition \ref{Prop:dGAIPcoAM}: in case $\dGAIP$ is $\NP$-complete we have $\PH=\Sigma_{2}^P$ using a result from \cite{schoning1988graph}.

\section{Conclusions}
\label{Sect:concl}

	We showed how some problems deriving from standard computation assumptions for group actions cannot be $\NP$-hard under widely believed conjectures of Complexity Theory. This should not be seen as a completely negative result and their use in cryptography could still be secure: these results concern the worst-case scenario and do not influence the validity of security assumptions used in cryptographic proofs. The fact that a problem is $\NP$-hard is not a certificate of security: for instance, we can think about the Merkle–Hellman knapsack cryptosystem \cite{merkle1978hiding} based on the $\NP$-complete \emph{Subset Sum Problem} but broken in polynomial time \cite{shamir1982polynomial}.\\
	Another consideration about implications of Theorem \ref{Th:problemsRsr} is that $\GAIP$, $\mGAIP$ and $\pGAIP$ are solvable with the same effort both on hard and random instances. This implies that the hardness of these problems in the worst-case equals the hardness in the average-case.\\
	The strong link seen in Section \ref{Subsect:non-tr} between $\dGAIP$ and other central problems in Complexity Theory, like the Graph Isomorphism Problem, suggests that problems from group actions could be situated in the gap between the set of $\NP$-hard ones and problems solvable in polynomial time.

	\section*{Acknowledgments}
	The author acknowledges support from TIM S.p.A. through the PhD scholarship, and Francesco Stocco and Andrea Gangemi for the helpful comments concerning this work.

\bibliographystyle{spmpsci}
\bibliography{main}
\end{document}